\documentclass[12pt]{article}

\usepackage{microtype}
\usepackage{times}
\usepackage{amsthm}
\usepackage{amsmath}
\usepackage{amssymb}
\usepackage{mathabx}
\usepackage{amsfonts}
\usepackage{graphicx}
\newtheorem{theorem}{Theorem}[section]

\usepackage[svgnames]{xcolor}
\usepackage{floatpag}
\usepackage{makecell}
\usepackage[raster]{tcolorbox}
\usepackage{microtype}
\usepackage[
  colorlinks=false,
  urlbordercolor=blue,
  pdfborderstyle={/S/U/W 1},
  pdftitle={Netrunner Mate-in-1 or -2 is Weakly NP-Hard},
  pdfauthor={Jeffrey Bosboom and Michael Hoffmann},
  pdflang=en,
]{hyperref}

\usepackage[svgnames]{xcolor}
\usepackage{floatpag}
\usepackage{makecell}
\usepackage[raster]{tcolorbox}

\usepackage{tikz}
\usetikzlibrary{shapes.geometric,positioning}
\usepackage{ifthen}

\usepackage{xparse}
\ExplSyntaxOn
\DeclareExpandableDocumentCommand{\myrepeat}{O{}mm}
 {
  \int_compare:nT { #2 > 0 }
   {
    #3 \prg_replicate:nn { #2 - 1 } { #1#3 }
   }
 }
\ExplSyntaxOff

\newcommand{\ccP}{\textrm{\textsc{P}}}
\newcommand{\ccPSPACE}{\textrm{\textsc{PSPACE}}}
\newcommand{\ccNP}{\textrm{\textsc{NP}}}
\newcommand{\ccEXP}{\textrm{\textsc{EXP}}}
\newcommand{\cccoNP}{\textrm{\textsc{coNP}}}

\newcommand{\threepart}{3-\textsc{partition}}
\newcommand{\twopart}{2-\textsc{partition}}

\title{Mad Science is Provably Hard: Puzzles in Hearthstone's Boomsday Lab are \ccNP-hard}

\author{
  Michael Hoffmann\\
  ETH Zurich\\[-.25em]
  \small{\url{hoffmann@inf.ethz.ch}}
    \and
  Jayson Lynch\\
  University of Waterloo\\[-.25em]
  \small{\url{jayson.lynch@uwaterloo.ca}}
    \and
  Andrew Winslow\\
  UT Rio Grande Valley\\[-.25em]
  \small{\url{andrew.winslow@utrgv.edu }}
}

\begin{document}

\maketitle

\begin{abstract}
We consider the computational complexity of winning this turn  (mate-in-1 or ``finding lethal'') in Hearthstone as well as several other single turn puzzle types introduced in the Boomsday Lab expansion.
We consider three natural generalizations of Hearthstone (in which hand size, board size, and deck size scale) and prove the various puzzle types in each generalization \ccNP-hard. 

\end{abstract}

\section{Introduction}

Collectible card games (CCGs) involve buying some cards, picking a subset or \emph{deck} of those cards, and playing against someone else who separately picked their own deck.
Collectible card games are fun and popular! In 2017 Hearthstone reported over 70 million registered players \cite{HearthstonePlayers}.

With the release of the Hearthstone set, The Boomsday Project, Blizzard introduced a number of puzzles related to the game. All of the puzzles require changing the board to match some desired state in a single turn. The puzzle types are: ``Lethal'', ``Mirror'', ``Board Clear'', and ``Survival''.  

In Lethal, the player must reduce their opponents health to zero. In Mirror, the player must make both sides of the board exactly the same. This means both boards must have minions of the same type in the same order, and any damage and status effects on those minions must be the same. In Board Clear, the player must ensure there are no more minions on the board, usually by destroying ones already in play. In Survival, the player must return their character to full health.

We will focus on the problem of finding lethal (analogous to ``mate-in-1'' from Chess) since it is the most common puzzle type; however, our proofs can be adapted to show \ccNP-hardness for the other three puzzle types. Even before Blizzard released The Boomsday Project, there were numerous collections of challenging ``lethal puzzles'' available online, e.g. on websites such as \href{http://hearthstonepuzzles.net/}{\color{blue} hearthstonepuzzles.net}, \href{http://www.hsdeck.com/forum/puzzles/}{\color{blue} hsdeck.com}, and \href{http://www.reddit.com/r/HearthPuzzle/}{\color{blue} reddit.com} (also see~\cite{Kotaku}). The frequent difficulty of such lethal puzzles leads to the consideration of the formal computational complexity of such puzzles - can these problems be shown to be computationally intractable under standard complexity-theoretic assumptions?

The Boomsday Project Labs allow for carefully designed situations with non-random decks, cards with reduced costs, complex board states, and even cards that are not normally available in the game. These are all useful tools for designing both puzzles and hardness proofs, however, we are also interested in the problem of finding lethal in which it would occur within a game itself. For this reason, we restrict our proofs to using cards normally available to players and we give a description of how the board state used in the reduction could have been created in a game of Hearthstone. However, for one proof, we do extensively use the puzzle property of the player knowing the contents and order of their deck.

\textbf{Related work.}
The body of work on the computational complexity of games and puzzles has become quite expansive, however, only a small amount of it considers the mate-in-one question. 
Although deciding who will win in a two player game is frequently \ccPSPACE- 
or \ccEXP-complete,\footnote{See~\cite{GPC} for several examples.}, the problem of `mate-in-1' or `finding lethal' is often far less computationally complex because many games have only a polynomial number of moves on any given turn and evaluating whether the new state of the board results in a win is computationally easy. 
Two examples where this problem is interesting are Conway's Phutball for which mate-in-1 is \ccNP-complete~\cite{Phutball} and Checkers for which mate-in-1 is in \ccP~\cite{Checkers}.

Although a fair amount of academic study has gone into collectible card games~\cite{Ward-2009a,zhang2017improving}, less is known about their computational complexity. Magic: The Gathering, perhaps the most well-know CCG, is one example that has been analyzed. 
A recent paper shows that deciding who wins in a game of Magic is undecideable~\cite{MtGTuring, churchill2019magic}. It has been shown that simply deciding if a move is legal is \cccoNP-complete~\cite{chatterjee2016complexity}. Mate-in-1 and Mate-in-2 for another CCG, Android Netrunner, was shown to be weakly \ccNP-hard~\cite{Netrunner}.

\section{Hearthstone}
\label{sec:Hearthstone}
Hearthstone is a popular online CCG made by Blizzard and themed after World of Warcraft. Players are able to purchase virtual cards with which to construct their decks. The game consists of players taking actions on their own turns, including casting spells, summoning and attacking with minions, and controlling heroes with the objective of reducing the enemy hero's health to zero.

New cards come out regularly in sets. Only the base set and most recent sets are allowed in the Standard format. However, all cards are allowed in the Wild format. Blizzard also occasionally changes cards to adjust game-play. Configurations of Hearthstone game considered here take place in the Wild format at the release of The Boomsday Project.\footnote{For further details of the rules of Hearthstone, see \url{http://www.hearthstone.com}.} There are also Solo Adventures which can have unique rules and cards. The Boomsday Project Lab is an example of a Solo Adventure which includes custom rules and cards specifically to facilitate mate-in-1 like puzzles for Hearthstone.

\subsection{Rules Overview}
Here we present a very brief summary of some of the basic rules in Hearthstone which are relevant to the proofs. The cards themselves often have text which specifies additional abilities and rules that may make game-play differ from the typical behavior described in this section.

In Hearthstone, players use \emph{mana} to pay for cards and abilities. Each turn they gain a \emph{mana crystal}, up to 10, and then gain mana equal to their mana crystals. Players also have a \emph{hero} which has \emph{health} and a \emph{hero power}. If a player's hero is ever reduced to zero or less health, that player loses. Hero powers are abilities that cost 2 mana and can be used once a turn.

There are four main types of cards which can be played: \emph{minions}, \emph{spells}, \emph{weapons}, and \emph{traps}. 

\textbf{Spells} typically have a one time effect, such as healing, drawing cards, or doing damage, and are discarded after they are played. Minions are played onto the \emph{Battlefield}. Each player cannot have more than 7 minion on the battlefield at a time. 

\textbf{Minons} have \emph{attack} and \emph{health} which are both non-negative integers. If a minion's health is reduced to zero or lower, it is removed from the battlefield. Minions and Heroes can attack once a turn if they have a positive attack value. When a minion (or hero) attacks, the player chooses an opponent's minion or hero as the target. If the opponent has a minion with \emph{taunt} then a minion with taunt must be selected as the target for any attacks. The attacking and attacked cards simultaneously deal damage to each other equal to their attack value. When a card takes damage its health is reduced by that amount. Minions can also have \emph{abilities} which effect game-play while they are on the battlefield. They can also have \emph{battlecry} or \emph{deathrattle} which triggers an effect (similar to a spell) when the minion is played or dies respectively. Minons without card text (abilities, battlecry, deathrattle) are called \emph{vanilla} minions. For example, \hyperref[hscard:raptor]{Bloodfen Raptor} and \hyperref[hscard:pitfighter]{Pit Fighter} are vanilla minions. 

\textbf{Weapons} give a hero an attack value. They also have a \emph{durability} which is reduced by 1 every time that hero attacks. If a weapon reaches zero durability it is destroyed. If a player plays a weapon while they already have one in play, the original weapon is destroyed and replaced by a new one.

\textbf{Traps} are not used in any of these proofs as their effects trigger on the opponents turn in response to some action taken by the opponent.

Not all cards can be included in a deck, most notably each hero has a class and decks can only contain \emph{neutral} cards or cards from their class. Decks must normally contain exactly 30 cards, no more than 1 copy of any legendary card and no more than 2 copies of any other card. However, during the game cards may add or remove cards from players' decks and the above constraints only apply while building decks, not in the middle of a game. Games generally begin with cards in a deck being randomly shuffled. However, in The Boomsday Lab puzzles, players might have pre-determined decks with a specific card ordering.

\subsection{Generalizing the game}

In the game of Hearthstone as available to players, the time and complexity of games are limited in several ways.
For instance, each turn has a time limit of 75 seconds (plus animations), games are limited to 89 turns, decks are limited to 60 cards, hands to 10 cards, and boards to 14 minions. The Boomsday Lab puzzles do not have a time limit.
In comparison, the computational complexity of problems is considered as the problem size grows to infinity.

In order to formalize finding lethal into a problem that can be analyzed, we consider generalizing the game in one of several ways, enabling puzzles of arbitrarily large size.
For Hearthstone, we see three natural generalizations: arbitrarily large boards, hands, and decks.
The game configurations obey all rules of Hearthstone, except that turns may take arbitrarily long, they may use arbitrarily many turns (and thus played cards and card copies) to reach, and will be either:
\begin{itemize}
\item \emph{Board-scaled}: the board may have arbitrarily many minions (beyond the~7 permitted in the game).
\item \emph{Hand-scaled}: players' hands may have arbitrarily many cards (beyond the~10 permitted in the game).
\item \emph{Deck-scaled}: players' decks may have arbitrarily many cards (beyond the~60 permitted in the game).
\end{itemize}
Unless otherwise stated, the configurations all occur at turns with~10 mana.
The \emph{lethal problem} of a configuration of Hearthstone is as follows: can the current player reduce their opponents health to zero this turn?

For the Deck-scaled versions of the game we make a further alteration: each player knows the entire content of her deck, including card ordering (i.e. each player has \emph{perfect information} about her deck). 
Normally cards are drawn uniformly at random from those in the deck. 
It would be very interesting to know whether the Deck-scaled version of the game remains hard (or is harder) with random draw.

\subsection{A Preliminary Combo}
\label{sec:prelimcombo}

The reductions below involve large numbers of specific cards.
Although the Boomsday Lab puzzles allow us the freedom to design the precise board state, we show a way of generating cards in a game.
Here we describe a sequence of players yielding arbitrarily many of desired sets of cards (with the caveat of working with very low probability). In particular, many of the cards generate random cards of a given type. In these cases, we assume they happen to generate exactly the cards we desire.

\textbf{The setup.}
On the prior turn, the opponent plays a \hyperref[hscard:millhouse]{Millhouse Manastorm}, causing all spells cast this turn to cost~0 mana.
Our board contains a \hyperref[hscard:brann]{Brann Bronzebeard}, which causes our battlecries to trigger twice, and another vanilla minion. 
Playing \hyperref[hscard:cabalists]{Cabalist's Tome} adds three random Mage spells to your hand. In this case, we assume it generates three copies of \hyperref[hscard:unstable]{Unstable Portal} which adds a random minion to your hand and reduces its mana cost by 3.
Playing two of the three \hyperref[hscard:unstable]{Unstable Portal}s generates a \hyperref[hscard:spellslinger]{Spellslinger} and \hyperref[hscard:void]{Void Terror}. 

\textbf{A cyclic play sequence.}
Since \hyperref[hscard:spellslinger]{Spellslinger} and \hyperref[hscard:void]{Void Terror} cards have mana cost~3, when obtained from \hyperref[hscard:unstable]{Unstable Portal} they have cost~0. \hyperref[hscard:spellslinger]{Spellslinger} has a battlecry which adds a random spell to each player's hand.
Playing \hyperref[hscard:spellslinger]{Spellslinger} yields a \hyperref[hscard:cabalists]{Cabalist's Tome} and a second spell, due to \hyperref[hscard:brann]{Brann Bronzebeard} causing its battlecry to trigger twice. \hyperref[hscard:void]{Void Terror} has a battlecry which destroys adjacent minions.
Playing \hyperref[hscard:void]{Void Terror} between the played \hyperref[hscard:spellslinger]{Spellslinger} and vanilla minion destroys them, recovering space on the board. 

In our hand is now an arbitrary spell (generated from \hyperref[hscard:spellslinger]{Spellslinger}), an arbitrary minion (generated from \hyperref[hscard:unstable]{Unstable Portal}), and a new copy of \hyperref[hscard:cabalists]{Cabalist's Tome}.
No mana has been spent, so this process can be repeated, obtaining a new arbitrary spell and a minion at each iteration.

\textbf{Playing cards.}
If we need to actually play these cards, we can generate \hyperref[hscard:innervate]{Innervates} to gain mana and alternate between generating \hyperref[hscard:void]{Void Terrors} and some other minion that costs 3 mana or less.
This combo requires 4~free minion slots (out of a maximum of~7) and 4~free hand slots, leaving 3 board slots and 6 hand slots for other aspects of later constructions.

\textbf{Obtaining the setup.} 
If neither hero is a mage, we could have generated the initial \hyperref[hscard:cabalists]{Cabalist's Tome} by playing \hyperref[hscard:yogg]{Yogg-Saron, Hope's End}, after having cast at least one spell during the game, and having it cast an \hyperref[hscard:unstable]{Unstable Portal} that generated a \hyperref[hscard:spellslinger]{Spellslinger}.

\section{NP-hardness for Hearthstone}
Here we give three different proofs for the three different generalized versions of Hearthstone. A large battlefield size is considered in Section~\ref{sec:board}, a large hand size is considered in Section~\ref{sec:hand}, and a large deck size is considered in Section~\ref{sec:deck}. The reductions are from \threepart{} and \twopart{} and use similar ideas. In all cases the opponent will have minions with taunt which we must destroy to be able to attack the opponent's hero. These will encode target sums and the attack values of our minions will encode our set of numbers which we want to partition. At this high level the reductions are very simple; the majority of the complication comes from properly constructing the needed attack and health values with cards in the game and a limited hand/deck/board space. It is also important to note that because some cards apply multiplicative factors to attack and health our \twopart{} reductions actually also yield strong NP-hardness for finding lethal in Hearthstone.

\subsection{Hardness of Board-Scaled Lethal}
\label{sec:board}

\begin{theorem}
The lethal problem for board-scaled instances of Hearthstone is \ccNP-hard.
\end{theorem}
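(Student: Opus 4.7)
The plan is to reduce from \threepart{}. Given an instance consisting of $3n$ positive integers $a_1, \ldots, a_{3n}$ summing to $nT$ and satisfying the standard range $T/4 < a_i < T/2$, I would construct a board-scaled configuration in which the active player controls $3n$ minions whose attacks are exactly $a_1, \ldots, a_{3n}$ (each with enough health to survive any trade), while the opponent controls $n$ taunt minions, each of health $T$ and $0$ attack, standing in front of a hero at $1$ health and having no other creatures in play. Because taunts must be destroyed before the hero can be targeted, every one of our minion attacks will be forced onto a taunt.

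The correctness argument is pure damage accounting. Total attack available equals $nT$ and total taunt health also equals $nT$, so destroying all $n$ taunts requires delivering exactly $T$ damage to each. The range $T/4 < a_i < T/2$ forces any subset of the $a_i$ that sums to $T$ to consist of exactly three elements, so such an assignment of attacks to taunts exists iff the $a_i$ admit a partition into triples of sum $T$ --- a \threepart{} solution. Once every taunt is down, the opponent's $1$-health hero is finished off by a $1$-damage source such as the Mage hero power Fireblast ($2$ mana, well within the $10$ available), yielding a lethal precisely when the \threepart{} instance is a yes-instance.

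The remaining work is to realize the described board state from an earlier legal game. I would invoke the cyclic setup of Section~\ref{sec:prelimcombo}, which supplies on demand, at no mana cost, arbitrarily many vanilla minions and cheap spells. Each of the $3n$ player minions would be summoned from a vanilla base and incrementally buffed with an attack-increasing spell (for example \textbf{Blessing of Might}) until its attack reaches the target $a_i$; each taunt would be built from a vanilla minion with \emph{taunt} and buffed in health up to $T$. Because \threepart{} is strongly \ccNP-hard, the values $a_i$ and $T$ are polynomially bounded, so only polynomially many setup turns and buff applications are needed and the state is reachable in polynomial time. The board-scaled model removes any concern about the number of minions on either side.

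The main obstacle I expect is the setup bookkeeping: applying each generated buff to the intended minion without overflowing the $10$-card hand and within the mana available on setup turns. Consuming each generated buff as soon as it appears keeps the hand from filling up, and the combo's self-sustaining $0$-mana loop supplies spells in unbounded quantity; the remaining subtlety is sequencing the summons and targeted buffs so that the correct minion receives each boost. Once the board is assembled, the damage-accounting argument above completes the reduction.
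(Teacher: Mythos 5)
Your reduction is essentially the paper's: encode \threepart{} by giving the player $3n$ minions whose attack values are the input integers, giving the opponent $n$ taunt minions whose health values are the target sum and a $1$-health hero, and observing that total attack equals total taunt health so that killing every taunt forces an exact, waste-free assignment of minions to taunts, which (by the range constraint $T/4 < a_i < T/2$) is precisely a \threepart{} solution. The core damage-accounting argument is correct and matches the paper.

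Two loose ends in your concrete realization are worth flagging, because the paper's seemingly arbitrary numbers exist precisely to close them. First, your finisher is Fireblast, which can legally be aimed at a taunt minion rather than the hero; you owe a one-line argument that this cannot create a spurious lethal. It cannot: spending Fireblast on a taunt means some minion must be reserved to hit the face, losing $a_j \geq 2$ attack from the taunts while Fireblast only replaces $1$, so the taunts cannot all die. The paper avoids even having to say this by making the hero's only damage source a weapon and giving the taunts $5$ attack against a $1$-health hero, so that swinging the weapon into a taunt is suicide and the final point of damage is locked onto the face. Second, Blessing of Might grants $+3$ attack, so starting from a fixed vanilla base you can only reach attack values in a single residue class modulo $3$, not arbitrary $a_i$; the paper sidesteps this by scaling every quantity by $4$ and building each minion as a Duskboar ($4$ attack) plus copies of Blessing of Kings ($+4/+4$), so that attack $4a_i$ and taunt health $4S/n$ are exactly realizable with cards generated by the combo of Section~\ref{sec:prelimcombo}. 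With those two details patched (most easily by adopting the paper's factor-of-$4$ scaling), your argument is complete.
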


\begin{proof}
The reduction is from \threepart.
Let $A = \{a_1, a_2, \dots, a_{3n}\}$ be the input multiset of positive integers that sum to $S$.
The goal is to partition $A$ into $n$ parts that each sum to $S/n$.
The game state is as follows, and is described from the first-person perspective of the current player's turn.

\textbf{Your hero, hand, deck, and board.}
Your hero is Anduin (Priest) with 1 health and \hyperref[hscard:lightsjustice]{Light's Justice} equipped (from \hyperref[hscard:bling]{Blingtron 3000}). \hyperref[hscard:lightsjustice]{Light's Justice} is a weapon allowing the hero to attack for 1 damage.	
Your hand and deck are empty.
Your board consists of $3n$ vanilla minions with attack values $4a_1, 4a_2, \dots, 4a_{3n}$ and 3 health each. 

\textbf{The opponent's hero, hand, deck, and board.}
The opponent's hero is Valeera Sanguinar (Rogue) with~1 health.
The opponent's board consists of $n$ minions, each with taunt, 5~attack, $4S/n$ health, and no other special text.

\textbf{Lethal strategies.}
To win, you must kill all $n$ of the opponent's minions, then do~$1$ damage to the opponent.
Attacking any minion with your weapon causes you to die.
So the opponent's minions must be killed with your minions.
The total attack of your minions is exactly $4S$, so to kill all enemy minions, each minion must not overdamage, i.e. each minion must do exactly its attack damage to some opponent minion.

Thus all of your minions must attack the enemy minions in a way that corresponds exactly to partitioning $a_1, a_2, \dots, a_{3n}$ (your minions' attack values) into $n$ groups of $S/n$ each (your opponent's minions' health values).
Such partitions are exactly the solutions to the \threepart{} instance.

\textbf{Achieving your board state.}
Your board is constructed by using the preliminary combo to generate $3n$ copies of \hyperref[hscard:duskboar]{Duskboar} and $S-n$ copies of \hyperref[hscard:blessing]{Blessing of Kings}. 
\hyperref[hscard:duskboar]{Duskboar}s have $4$~attack and $1$~health and the \hyperref[hscard:blessing]{Blessing of Kings} each add $4$ attack and $4$ health.
These are cast on the \hyperref[hscard:duskboar]{Duskboar}s such that their attacks correspond to the values $4a_1, 4a_2, \ldots, 4a_{3n}$. 
The extra minions needed for the combo are removed later via combo-generated \hyperref[hscard:assassinate]{Assassinate}s.

\textbf{Achieving the opponent's board state.}
The opponent's board can be constructed using the preliminary combo from Section~\ref{sec:prelimcombo}, where the combo generates and the opponent plays $n$ \hyperref[hscard:heckler]{Evil Hecklers} and $S-n$ \hyperref[hscard:blessing]{Blessing of Kings}. 
\hyperref[hscard:heckler]{Evil Hecklers} have 5~attack, 4~health, and taunt. 
\hyperref[hscard:blessing]{Blessing of Kings} gives a minion $+4$ attack and $+4$ health. 
These can be distributed to construct the opponents board which contains $n$ minions with taunt, each with $4S/n+1$ attack, and $4S/n$ health. 
\end{proof}

\subsection{Hardness of Hand-Scaled Lethal}
\label{sec:hand}

\begin{theorem}
The lethal problem for hand-scaled instances of Hearthstone is weakly \ccNP-hard.
\end{theorem}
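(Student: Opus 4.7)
The plan is to reduce from \twopart, which is weakly \ccNP-hard: given a multiset $A = \{a_1, \ldots, a_n\}$ of positive integers (written in binary) summing to $S$, decide whether $A$ partitions into two subsets each of sum $S/2$. As in the board-scaled proof, I would make the opponent's hero unreachable except through a small set of taunt minions, here set to exactly two taunts of health $S/2$ each (with any small offsets picked up by a final \hyperref[hscard:lightsjustice]{Light's Justice} weapon swing on the $1$-HP opposing hero). Your board is kept nearly empty; all the combinatorial action is pushed into the hand. The hand is loaded with one ``block'' of cards per element of $A$, where playing the $i$-th block commits exactly $a_i$ damage to a taunt of your choice. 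A lethal play then corresponds precisely to a valid 2-partition, with each block's target taunt indicating its side.

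Since hand size is unbounded but the board is capped at $7$ minions, every block must live in hand and be spent one at a time. Because $a_i$ may be exponentially large in the input, each block must use only $O(\log a_i)$ cards. I would realize this with the Priest class combo of \emph{Divine Spirit} (doubles a friendly minion's health) and \emph{Inner Fire} (sets a friendly minion's attack equal to its health): the $i$-th block consists of a vanilla charge minion of $1$ health, a Horner-style sequence of ``$+1$ health'' buffs interleaved with Divine Spirits that builds up the binary representation of $a_i$, and a single Inner Fire that promotes the resulting health to attack. Playing a block then summons the minion, casts its buffs from hand, and charges a chosen taunt, whose retaliation destroys the exhausted attacker and frees a board slot for the next block. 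Mana is supplied by free \hyperref[hscard:innervate]{Innervate}s generated by the preliminary combo of Section~\ref{sec:prelimcombo}, and the opponent's two taunts are again assembled from \hyperref[hscard:heckler]{Evil Hecklers} buffed by \hyperref[hscard:blessing]{Blessing of Kings}, exactly as in Section~\ref{sec:board}.

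The forward direction ``$2$-partition $\Rightarrow$ lethal'' is immediate from the play described above. For the converse I would use that the total attack the blocks can generate sums to exactly $S$, so destroying both health-$S/2$ taunts requires committing exactly $S/2$ to each with no slack, which in turn forces the blocks used against each taunt to form a subset of $A$ summing to $S/2$. The step I expect to be the main obstacle is certifying \emph{rigidity} of the encoding: Divine Spirits and $+1$ health buffs in the hand are generic, so in principle a solver could concentrate all of one kind of buff onto a single minion and manufacture attack values not matching any $a_i$. The plan to close this loophole is to give each block's base minion a small distinguishing health offset and to tune the taunt healths so that any reassignment of a single Divine Spirit pushes some minion's attack off every achievable partial sum of $A$, making an exact $S/2 + S/2$ split unreachable unless each block realizes exactly its designated $a_i$. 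Once this rigidity is argued, the reduction uses $\operatorname{poly}(n, \log \max_i a_i)$ cards, transferring \twopart's weak \ccNP-hardness to the hand-scaled lethal problem.
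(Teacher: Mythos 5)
Your high-level frame (two taunts of total health matching the sum, one ``number gadget'' per element, exactness forced by a zero-slack total) matches the paper's, but your encoding of the integers has a genuine gap that the paper's construction is specifically designed to avoid. You encode each $a_i$ as a block of generic buff spells (Divine Spirit doublings, $+1$-health increments, Inner Fire) to be assembled \emph{during} the turn. As you yourself note, nothing in the game state ties a buff to ``its'' block: all $n$ base minions and all the doubling spells sit in one undifferentiated hand, and a solver may apply them in any order to any minion. This is not a technicality to be patched but the crux: because Divine Spirit is multiplicative, reallocating doublings from small-health minions to large-health ones can make the \emph{total} attack strictly exceed $S$, so the ``no slack'' argument for the converse direction collapses --- a non-partitionable instance could still admit lethal via overkill. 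Your proposed fix (small per-block health offsets plus tuned taunt healths) is only a plan, and it is not clear it can work at all: you would need to show that \emph{every} reassignment of doublings misses an exact $S/2+S/2$ split, over exponentially many reassignments, and offsets interact badly with doubling (an offset applied before $k$ doublings is amplified by $2^k$).

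The paper sidesteps this entirely by making the integers immutable before the turn begins: the hand contains $n$ copies of \hyperref[hscard:bolvar]{Bolvar Fordragon} whose attack values $b_i = 4a_i - 2$ were already accumulated via its in-hand passive (``gain $+1$ Attack whenever a friendly minion dies while this is in your hand''), together with one \hyperref[hscard:charge]{Charge} per minion and enough \hyperref[hscard:innervate]{Innervate}s to pay for everything. The only degree of freedom left to the solver is which of the two taunts each fixed-attack minion hits, so the zero-slack counting argument goes through directly. (Exponentially large attack values written in binary on polynomially many hand cards is exactly what makes this a \emph{weak} hardness reduction, so no logarithmic-size spell encoding is needed in the first place.) If you want to salvage your Divine Spirit approach, you would have to prove the rigidity lemma you only sketch; as written, the reduction is not correct.
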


\begin{proof}
We reduce from \twopart. 
Let $A = \{a_1, a_2, \dots, a_n\}$ be the input set of (exponentially large in $n$) integers that sum to $S$. 
The goal is to partition the integers into two sets that each sum to $S/2$.

\textbf{Your hero, hand, deck, and board.}
Your hero is Jaina (Mage) with 1 health and a \hyperref[hscard:lightsjustice]{Light's Justice} (created by an earlier \hyperref[hscard:bling]{Blingtron 3000}).
Your hand consists of $n$ copies of \hyperref[hscard:bolvar]{Bolvar Fordragon} with attack equal to $b_i = 4a_i - 2$, $n$ copies of \hyperref[hscard:charge]{Charge}, $6n$ copies of \hyperref[hscard:innervate]{Innervate} (to pay for the \hyperref[hscard:bolvar]{Bolvar Fordragon} and \hyperref[hscard:charge]{Charge}).
Your board is empty. \hyperref[hscard:charge]{Charge} allows minions to attack the turn they come into play and Innervate generates additional mana. 

\textbf{Your opponent's hero and board.}
Your opponent's hero is Uther (Paladin) with 2 health.
Your opponents board consists of 2 vanilla minions with taunt, at least~7 attack, $4S$~health (buffed via \hyperref[hscard:blessing]{Blessing of Kings} and \hyperref[hscard:champion]{Blessed Champion}).

\textbf{Lethal strategies.}
In order to win this turn (or at all), you must kill both large minions of the opponent, then do 2~total damage by attacking and using your hero power. 
To win, we must play the \hyperref[hscard:bolvar]{Bolvar Fordragon}s, give them Charge, and attack the minions with taunt such that they both die. 
Since the total attack of all of the \hyperref[hscard:bolvar]{Bolvar Fordragon}s and \hyperref[hscard:charge]{Charge}s equals the health of the two minions, we cannot succeed unless we cast \hyperref[hscard:charge]{Charge} on every \hyperref[hscard:bolvar]{Bolvar Fordragon} exactly once. 
Thus we must allocate the  \hyperref[hscard:bolvar]{Bolvar Fordragon}s between the two enemy minions such that their attack adds up exactly to that of the health of the minions. 

This partition is exactly the solution to the given \twopart{} instance.
The scaling of the attack and health is to deal with the bonus to attack given by \hyperref[hscard:charge]{Charge} and the possibility of $a_i=1$.

\textbf{Achieving your board state.}
In general we will use the infinite combo given earlier in this section to obtain the necessary cards. 
First, we generate all of the cards needed except for the Bolvar Fordragons, as well as some additional cards specified shortly. 
We need to create the \hyperref[hscard:bolvar]{Bolvar Fordragon}s and between them cause many minions to die until we reach the correct values for our set being partitioned. 

Assume the $b_i$ are ordered from largest to smallest. 
We will fill our hand with $n$ copies of \hyperref[hscard:unstable]{Unstable Portal}, $b_1$ copies of \hyperref[hscard:stonetusk]{Stonetusk Boar}, $b_1/2$ copies of \hyperref[hscard:innervate]{Innervate}, and~1 \hyperref[hscard:bestialwrath]{Bestial Wrath}.  
The \hyperref[hscard:bestialwrath]{Bestial Wrath} is cast on an opponent's beast, say \hyperref[hscard:raptor]{Bloodfen Raptor}, to ensure we have a way of killing all of the \hyperref[hscard:stonetusk]{Stonetusk Boar}s.
We play an \hyperref[hscard:unstable]{Unstable Portal} which summons a \hyperref[hscard:bolvar]{Bolvar Fordragon}s. 
We then play a copy of \hyperref[hscard:stonetusk]{Stonetusk Boar} and attack the opponent's \hyperref[hscard:raptor]{Bloodfen Raptor} $b_1-b_2$ times. 
This causes the \hyperref[hscard:bolvar]{Bolvar Fordragons} to gain $b_1-b_2$ attack. 

We then cast another  \hyperref[hscard:unstable]{Unstable Portal} to obtain and play another \hyperref[hscard:bolvar]{Bolvar Fordragons}, then cast $b_2-b_3$ copies of \hyperref[hscard:stonetusk]{Stonetusk Boar}, and attack with them. 
Both \hyperref[hscard:bolvar]{Bolvar Fordragon}s gain $b_2-b_3$ attack from the minions that die. 
We repeat this pattern until we have the $n$th copy of Bolvar Fordragon and we attack with $b_n-1$ copies of Stonetusk Boar. 
Since Bolvar Fordragon starts with~1 attack, we've now caused them to gain the exact amount of attack to have one equal to each of our \threepart{} values.
\end{proof}

\subsection{Hardness of Deck-Scaled Lethal}
\label{sec:deck}

\begin{theorem}
The lethal problem for deck-scaled instances of Hearthstone is \ccNP-hard.
\end{theorem}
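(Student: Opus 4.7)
The plan is to reduce from \twopart{} by repeating the hand-scaled construction of Section~\ref{sec:hand}, but moving the bulk of the cards out of the initial hand and into a pre-arranged deck, using the deck-scaled version's perfect-information assumption to guarantee I draw them in the exact order I need. Given $A = \{a_1, \dots, a_n\}$ summing to $S$, I keep the scaled targets $b_i = 4a_i - 2$. The opponent's side is essentially the same as in the hand-scaled proof: a hero at~2~health shielded by two vanilla taunt minions whose combined health is exactly $\sum_i b_i$ (boosted from a vanilla base via a sequence of \hyperref[hscard:blessing]{Blessing of Kings} and \hyperref[hscard:champion]{Blessed Champion} casts) and whose attack is high enough that every minion I ram into them dies on the counter-attack. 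An opponent \hyperref[hscard:raptor]{Bloodfen Raptor} enchanted with \hyperref[hscard:bestialwrath]{Bestial Wrath} serves, as in Section~\ref{sec:hand}, as a safe ``sacrificial'' target on which my own \hyperref[hscard:stonetusk]{Stonetusk Boars} die.

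I would then stack the deck as $n$ consecutive chapters, one per input value~$a_i$. Each chapter~$i$ packages together (in drawing order) a cheap card-draw spell that flips me into it, enough \hyperref[hscard:innervate]{Innervates} to pay for what follows, one \hyperref[hscard:unstable]{Unstable Portal} assumed to summon a \hyperref[hscard:bolvar]{Bolvar Fordragon} into my hand, $b_i - 1$ \hyperref[hscard:stonetusk]{Stonetusk Boars} interleaved with their Innervates, and a single \hyperref[hscard:charge]{Charge}. To execute chapter~$i$, I draw the new Bolvar into hand, then play and suicide each of the $b_i - 1$ boars against the Raptor; each boar death, while this chapter's Bolvar sits in my hand, bumps its attack by exactly~1, so it enters play with attack~$b_i$. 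I then cast \hyperref[hscard:charge]{Charge} on it and swing into one of the two opponent taunts of my choice; the Bolvar dies from the counter-attack, clearing its board slot, and the next chapter begins.

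Correctness is immediate from the hand-scaled analysis: the combined health of the two taunts equals the total attack of all $n$ Bolvars, so no Bolvar can ``overkill'' without leaving a taunt alive, and the only degree of freedom in the whole play sequence is the choice, per chapter, of which taunt each Bolvar hits. A lethal sequence therefore exists if and only if the $b_i$ (equivalently, the $a_i$) can be split into two subsets of equal sum, which is exactly a solution to the input \twopart{} instance. Since the deck contains $O(\sum_i b_i) = O(S)$ cards, the reduction is polynomial in the unary size of the input, which gives weak \ccNP-hardness as claimed.

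The step I expect to be the real obstacle is staying inside the game's hand cap of~10 and board cap of~7 throughout the whole run of $n$ chapters. Unlike in the hand-scaled proof, I cannot front-load every Bolvar, boar, and Innervate into my starting hand: I only have room for the current chapter's Bolvar, a couple of boars, an \hyperref[hscard:innervate]{Innervate}, and a \hyperref[hscard:charge]{Charge}, on top of the \hyperref[hscard:spellslinger]{Spellslinger}/\hyperref[hscard:void]{Void Terror} combo footprint of Section~\ref{sec:prelimcombo}. This forces me to interleave draw, play, suicide, and attack actions finely inside each chapter, and to reuse the board slot that opens each time a Bolvar counter-dies. The perfect-information deck assumption is what makes this tight scheduling achievable: by pre-arranging each chapter's cards in the precise order they are consumed (and in particular by scattering the boars among their Innervates rather than drawing them all at once), I can guarantee deterministically that the hand never overflows and that at most a constant number of cards from any given chapter sit in hand at any one time.
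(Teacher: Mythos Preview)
Your approach is genuinely different from the paper's, and the difference matters. The paper does \emph{not} port the hand-scaled construction into the deck. Instead it builds each Bluegill Warrior's attack value to $a_i$ using a \emph{binary} encoding: a short sequence of \hyperref[hscard:champion]{Blessed Champion} (double attack) and \hyperref[hscard:blessing]{Blessing of Kings} ($+4$) casts, interleaved with \hyperref[hscard:vigil]{Solemn Vigil} draws, with \hyperref[hscard:anyfin]{Anyfin Can Happen} deterministically resurrecting the one dead murloc. Because doubling is available, each $a_i$ is encoded in $O(\log a_i)$ cards, so the deck has size polynomial in the binary input and the paper obtains \emph{strong} \ccNP-hardness. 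Your unary scheme ($b_i-1$ boars per value) yields a deck of size $\Theta(S)$ and hence only weak hardness; that still proves the theorem as literally stated, but it is a strictly weaker result than what the paper's construction delivers.

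There is also a genuine gap in your plan: you rely on playing \hyperref[hscard:unstable]{Unstable Portal} \emph{during the lethal turn} and ``assuming'' it hands you a \hyperref[hscard:bolvar]{Bolvar Fordragon}. The perfect-information assumption in the deck-scaled model removes randomness from \emph{draw order}, not from random-generation effects; Unstable Portal still produces a uniformly random minion. So your reduction does not produce a deterministic lethal puzzle: for a yes-instance the player cannot force a win, and for a no-instance the player might still luck into lethal via some other generated minion. The paper is careful to use only deterministic effects on the solve turn (Anyfin Can Happen is deterministic because exactly one murloc has died). You could try to repair this by putting the Bolvars themselves in the deck, but then you must also explain how $n$ copies of a Legendary reached the deck and how you pay their $5$-mana cost without Unstable Portal's discount, and you still owe the tight hand-cap scheduling you flag in your last paragraph; the paper solves that problem by a completely different mechanism (the forced Solemn Vigil cadence and Cult Master trigger), not by per-chapter bookkeeping.
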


\begin{proof}
We reduce from \twopart.
Let $A = \{a_1, a_2, \dots, a_n\}$ be the input set of (exponentially large in $n$) integers that sum to $S$. 
The goal is to partition the integers into two sets that each sum to $S/2$.

\textbf{Your hero, hand, and board.}
Your champion is Uther (Paladin).
Your hand consists of~8 \hyperref[hscard:pitfighter]{Pit Fighter}s and~1 \hyperref[hscard:vigil]{Solemn Vigil}.
Your board consists of~4 frozen vanilla minions. 

\textbf{Your opponent's hero, hand, deck, and board.}
Your opponent's hero is Uther (Paladin).
Your opponent's board consists of~2 \hyperref[hscard:dummy]{Target Dummy}s buffed to $S/2$ health (via \hyperref[hscard:blessing]{Blessing of Kings} and \hyperref[hscard:champion]{Blessed Champion} to get a large attack and then swapping attack and health with \hyperref[hscard:crazed]{Crazed Alchemist}) and \hyperref[hscard:millhouse]{Millhouse Manastorm}.
Your opponent played \hyperref[hscard:millhouse]{Millhouse Manastorm} last turn, so all spells you cast this turn are free.
Your opponent's deck consists of~1 \hyperref[hscard:bluegill]{Bluegill Warrior}.

\textbf{Your deck.}
Your deck contains the following cards: \hyperref[hscard:vigil]{Solemn Vigil} (SV), \hyperref[hscard:blessing]{Blessing of Kings} (BoK), \hyperref[hscard:champion]{Blessed Champion} (BC), \hyperref[hscard:pitfighter]{Pit Fighter} (PF), and \hyperref[hscard:anyfin]{Anyfin Can Happen} (ACH). 
A \hyperref[hscard:bluegill]{Bluegill Warrior} (BW) had been played earlier and died. 
No other murlocs have died in this game, thus ACH will always summon BW.
The sequence of the first~4 cards, called the \emph{setup sequence}, is SV, PF, SV, PF.

For an integer $n$ ($= b_1 b_2 \ldots b_k$ in binary) with $b_{k-1} b_k = 00$, define the \emph{encoding sequence} of $n$ as a polynomial length sequence of left \emph{bit shifts} (equivalent to multiplying by~2) and \emph{increments} by~8 (equivalent to incrementing by~100 in binary) to obtain $n$. We will be interested in using this to encode the input sequence to the \twopart{} instance.

Define the \emph{integer card sequence} of $n$ to be the sequence of cards obtained by replacing each bit shift and increment in an encoding sequence by BC and BoK, respectively, appending ACH and BC to the beginning of the sequence, and then replacing each card with a SV followed by the card. 
For example, the following sequence of cards encodes $1110100100$: [SV, ACH, SV, BC, SV, BC, SV, BoK, SV, BC, SV, BoK, SV, BC, SV, BC, SV, BoK, SV, BC, SV, BC, SV, BC, SV, BoK].
The complete deck consists of the setup sequence, followed by the integer card sequence for each $a_i$, followed by an ACH.

\textbf{Lethal strategies.}
In order to win this turn, both \hyperref[hscard:dummy]{Target Dummy}s must be killed and~1 damage dealt to the opponent.
To have any possibility of lethal, damage must be done by obtaining ACH via draw.
Thus both \hyperref[hscard:pitfighter]{Pit Fighter}s must be drawn and played, leaving only one open slot to play minions.

Since you've spent all 10 mana, nothing else with positive mana cost (namely the other \hyperref[hscard:pitfighter]{Pit Fighter}s in your hand) can be played.
Thus your hand has between~8 and~10 cards for the remainder of the turn.
Moreover, the interleaving of SV with other spells implies that any attempt to play SV with a hand of~9 cards causes one of the two drawn cards, namely another SV, to be burnt, preventing further card draw. 

At the end of each integer card sequence, to continue to draw into the deck without burning an ACH, we need a minion to die and trigger \hyperref[hscard:cultmaster]{Cult Master} to draw an additional card.
Since \hyperref[hscard:bluegill]{Bluegill Warrior}s are the only minions that can attack, each must be killed at the end of the integer card sequence in which it was drawn and thus only the buffs from one integer card sequence can be applied to a given \hyperref[hscard:bluegill]{Bluegill Warrior}.

Since these buffs on the BW yield a total attack value of $S$, if any buff is not played on a BW then the BWs will have a total attack less than $S$ and thus the two Target Dummies cannot be killed.
Thus any lethal turn involves each BW being buffed with exactly the buffs in the integer card sequence they belong to, and attacking one of the two large minions.

So any lethal play sequence consists of buffing BW to attack values corresponding to $a_1, a_2, \dots, a_n$ and attacking each into one of two opponent \hyperref[hscard:dummy]{Target Dummy}s, followed by a final ACH being drawn and attacking the opponent with the final BW.
Since buffed \hyperref[hscard:bluegill]{Bluegill Warrior}s have $S$ total attack damage, killing both \hyperref[hscard:dummy]{Target Dummy}s requires partitioning the attack values of the buffed BW into two subsets, each of $S/2$ total attack value.

\textbf{Strong hardness.}
Note that unlike the prior reduction from \twopart, this reduction establishes strong NP-hardness. 
This is due to the encoding of exponentially large numbers as minion attack and health values in a polynomial number of cards, one of which (\hyperref[hscard:champion]{Blessed Champion}) doubles minion attack.
because the input to the problem is given by a polynomial number of cards and prior plays in the game. The exponentially large minion sizes are efficiently encoded by a series of plays of BC and BoK.
\end{proof}

\subsection{Adapting to Other Puzzle Types}
In addition to finding ``Lethal'', our proofs can be adapted to show the other three puzzle types introduced in the Boomsday Lab are also \ccNP-hard. In general these proofs require us to carefully construct appropriate minions to remove powerful minions with taunt and allow us to attack the enemy hero. We will show that there are minions which could be chosen that we can attack instead of the enemy hero to accomplish the other goals. 

\paragraph{Survival.} In this puzzle, the player's objective is to restore their hero to full, normally 30, health. To adapt to this case, in each reduction we give our opponent a \hyperref[hscard:mistress]{Mistress of Mixtures} which has two attack, two health, and upon dying restores 4 health to each hero. We make sure the \hyperref[hscard:mistress]{Mistress of Mixtures} only has one health remaining, perhaps by previously damaging it with a \hyperref[hscard:stonetusk]{Stonetusk Boar}. We have your character's health set to 28, so even if you must attack the \hyperref[hscard:mistress]{Mistress of Mixtures} to kill it you will take 2 damage but then regain 4 health returning you to full.

\paragraph{Board Clear.} In this puzzle, the player's objective is to kill all minions on the board. This is achieved in our board scaled reduction. In the other two reductions either your opponent has a leftover \hyperref[hscard:mistress]{Bloodfen Raptor} or you have leftover frozen minions (which we will assume are also Bloodfen Raptors for the sake of simplicity). To fix this issue, we give your opponent an \hyperref[hscard:sheep]{Explosive Sheep} which is a 1 attack, 1 health minion that does 2 damage to all other minions when it dies. Instead of attacking your opponent once you've gotten rid of the taunt minions in the way, attack the \hyperref[hscard:sheep]{Explosive Sheep} whose damage will kill off the remaining unwanted Bloodfen Raptors. In this case we will set your hero's health to 2, so it is greater than the damage done by attacking the \hyperref[hscard:sheep]{Explosive Sheep}.

\paragraph{Mirror.} In this puzzle the player must make both sides of the board identical. We note that if the player clears the board, then both sides will be identically empty, fulfilling the technical requirement if not the spirit of the puzzle. We use the same augmentation as we did with the board clear goal and note that the player has no way of playing the same minions as their opponent and thus cannot fulfill the mirror requirement if any of their opponent's minions are on the board. Thus the only solution is one that involves clearing the board.

\section{Open Problems}
Given the ability to generate an arbitrary number of Hearthstone cards on a single turn, it is not clear Hearthstone puzzles are in \ccNP, or even \ccPSPACE. Obtaining upper bounds on the complexity is a clear open question. Also, for these puzzles, we assume perfect information. If we exploit imperfect information and randomness, even with a bounded number of plays we might suspect the problem is \ccPSPACE-hard.

Hearthstone puzzles also occur on a single turn, thus eliminating the 2-player aspect of the game. What is the complexity of deciding if a player in a game of Hearthstone has a forced win?

Although Magic and Hearthstone are likely the two most famous CCG's at the moment, there have been a number of other such games in the past. It would be interesting to see other examples studied, as well as a general framework for understanding when such games are computationally intractable.

Finally, we have yet to see a formalization of the problem of deck construction and the meta-game involved in most competitive CCGs. Since deck building is such an important and integral part of many of these games, it would be interesting to have a more formal understanding of the questions and process involved.

\subparagraph*{Acknowledgments}

We wish to thank Jeffrey Bosboom for significant feedback and discussion about this paper, as well as LaTeX expertise. We would also like to thank the other participants and especially the organizers (Erik Demaine and Godfried Toussaint) of the Bellairs Research Institute Winter Workshop on Computational Geometry 2015.

\bibliographystyle{plainurl} 
\bibliography{cardgames}

\appendix

\newpage\section{Hearthstone Card Details}\label{sec:hearthstone-cards}

Below is a list of cards used in the constructions and their details. The name of the card is the top line of text and is underlined. Below that is a colored oval denoting the rarity of the card followed by any ability and rules text of the card. The mana cost is given in the upper left hand corner. The attack of the card (for minions and weapons) is given in the lower left. The number in the lower right is the durability for weapons or the health of a minion.

\colorlet{Common}{White}
\colorlet{Rare}{Blue}
\colorlet{Epic}{Purple}
\colorlet{Legendary}{Orange}

\colorlet{Mage}{Blue}
\colorlet{Paladin}{Yellow}
\colorlet{Druid}{Brown}
\colorlet{Rogue}{DarkGrey}
\colorlet{Warrior}{Red}
\colorlet{Warlock}{Purple}
\colorlet{Hunter}{Green}
\colorlet{Neutral}{Grey}

\newcommand{\hscardboxm}[8]{
\begin{tcolorbox}[width=4.3cm, height=6.02cm, colframe=black, coltitle=black, colback=#7!10, halign=center, lower separated=false, valign lower=bottom, left=0pt, right=0pt]
\vspace{-5mm}
\begin{flushleft}
\hspace{-5mm}
\begin{tikzpicture}
\node[fill=blue!20, draw=black!100, thick, text width=1cm, inner sep=-1mm, align=center, regular polygon, regular polygon sides=6, rounded corners=1mm]{{\fontsize{30}{35}\selectfont \textbf{#2}}};
\end{tikzpicture}
\end{flushleft}

\vspace{0.3cm}

\href{https://www.hearthpwn.com/cards/#8}{{\small \textbf{#1}}} 

\vspace{0.1cm}

\ifthenelse{\equal{#5}{0}}{
\begin{tikzpicture}
    \draw [fill=#7!10, draw=#7!10] (0,0) ellipse (0.15cm and 0.2cm);
\end{tikzpicture}
}{
\begin{tikzpicture}
    \draw [fill=#5] (0,0) ellipse (0.15cm and 0.2cm);
\end{tikzpicture}
}


\vspace{0.1cm}

{\small {\fontfamily{iwona}\selectfont #6}}

\tcblower

\begin{tikzpicture}[remember picture, overlay]
\node[xshift=2mm, fill=yellow!40, inner sep=1pt, circle, thick, draw=black!100] (a) {{\fontsize{30}{35}\selectfont \textbf{#3}}};
\node[fill=red!40, inner sep=1pt, circle, thick, draw=black!100, right=2.57cm of a] (b) {{\fontsize{30}{35}\selectfont \textbf{#4}}};
\end{tikzpicture}

\end{tcolorbox}
}

\newcommand{\hscardboxs}[6]{
\begin{tcolorbox}[width=4.3cm, height=6.02cm, colframe=black, coltitle=black, colback=#5!10, halign=center, halign lower=flush right, lower separated=false, left=0pt, right=0pt]
\vspace{-5mm}
\begin{flushleft}
\hspace{-5mm}
\begin{tikzpicture}
\node[fill=blue!20, draw=black!100, thick, text width=1cm, inner sep=-1mm, align=center, regular polygon, regular polygon sides=6, rounded corners=1mm]{{\fontsize{30}{35}\selectfont \textbf{#2}}};
\end{tikzpicture}
\end{flushleft}
\vspace{0.3cm}
\href{https://www.hearthpwn.com/cards/#6}{{\small \textbf{#1}}} \\
\vspace{0.1cm}

\ifthenelse{\equal{#3}{0}}{
\begin{tikzpicture}
    \draw [fill=#5!10, draw=#5!10] (0,0) ellipse (0.15cm and 0.2cm);
\end{tikzpicture}
}{
\begin{tikzpicture}
    \draw [fill=#3] (0,0) ellipse (0.15cm and 0.2cm);
\end{tikzpicture}
}
\vspace{0.1cm}
\\
{\small {\fontfamily{iwona}\selectfont #4}}
\tcblower
\end{tcolorbox}
}



\vspace{1.5cm}

\begin{tabular}{c c c}
\label{hscard:anyfin}
 \hscardboxs{Anyfin Can Happen}{10}{Rare}{Summon 7 Murlocs\\that died this game.}{Paladin}{27240-anyfin-can-happen}&

\label{hscard:assassinate}
\hscardboxs{Assassinate}{5}{0}{Destroy an enemy\\minion.}{Rogue}{568-assassinate} &

\label{hscard:bestialwrath}
\hscardboxs{Bestial Wrath}{1}{Epic}{Give a friendly Beast\\+2 Attack and \textbf{Immune}\\this turn.}{Hunter}{304-bestial-wrath} 
\\

\label{hscard:blessing}
\hscardboxs{Blessing of Kings}{4}{0}{Give a minion +4/+4.\\\emph{(+4 Attack/+4 Health)}}{Paladin}{29-blessing-of-kings} & 

\label{hscard:champion}
\hscardboxs{Blessed Champion}{5}{Rare}{Double a minion's\\attack.}{Paladin}{7-blessed-champion} & 

\label{hscard:bling}
\hscardboxm{Blingtron 3000}{5}{3}{4}{Legendary}{\textbf{Battlecry:} Equip a random\\weapon for each player.}{Neutral}{12183-blingtron-3000} 
\\

\end{tabular}
\newpage
\begin{tabular} {c c c}

\label{hscard:raptor}
\hscardboxm{Bloodfen Raptor}{2}{3}{2}{0}{~}{Neutral}{576-bloodfen-raptor}&

\label{hscard:bluegill}
\hscardboxm{Bluegill Warrior}{2}{2}{1}{0}{\textbf{Charge}}{Neutral}{289-bluegill-warrior}&

\label{hscard:bolvar}
\hscardboxm{Bolvar Fordragon}{5}{1}{7}{Legendary}{Whenever a friendly minion\\dies while this is in your\\hand, gain +1\\Attack.}{Paladin}{12244-bolvar-fordragon} 
\\

\label{hscard:ogre}
\hscardboxm{Boulderfist Ogre}{6}{6}{7}{0}{~}{Neutral}{60-boulderfist-ogre}&

\label{hscard:brann}
\hscardboxm{Brann Bronzebeard}{3}{2}{4}{Legendary}{Your~\textbf{Battlecries}~trigger\\twice.}{Neutral}{27214-brann-bronzebeard} &

\label{hscard:cabalists}
\hscardboxs{Cabalist's Tome}{5}{Epic}{Add 3 random Mage\\spells to your hand.}{Mage}{33155-cabalists-tome} 
\\

\label{hscard:charge}
\hscardboxs{Charge}{1}{0}{Give a friendly minion\\\textbf{Charge}. It can't attack\\heroes this turn.}{Warrior}{646-charge} & 


\label{hscard:crazed}
\hscardboxm{Crazed Alchemist}{2}{2}{2}{Rare}{\textbf{Battlecry:} Swap the Attack and Health of a minion.}{Neutral}{https://hearthstone.gamepedia.com/Crazed_Alchemist} &

\label{hscard:cultmaster}
\hscardboxm{Cult Master}{4}{4}{2}{Common}{Whenever one of your other\\minions dies, draw a card.}{Neutral}{140-cult-master} 
\\

\end{tabular}
\newpage
\begin{tabular}{c c c}

\label{hscard:duskboar}
\hscardboxm{Duskboar}{2}{4}{1}{Common}{~}{Neutral}{35251-duskboar} &

\label{hscard:heckler}
\hscardboxm{Evil Heckler}{4}{5}{4}{Common}{\textbf{Taunt}}{Neutral}{22390-evil-heckler} &

\label{hscard:sheep}
\hscardboxm{Explosive Sheep}{2}{1}{1}{Common}{\textbf{Deathrattle: } Deal 2 damage to all minions.}{Neutral}{12180-explosive-sheep}
\\

\label{hscard:innervate}
\hscardboxs{Innervate}{0}{0}{Gain 1 Mana Crystal\\this turn only.}{Druid}{77035-innervate} & 

\label{hscard:lightsjustice}
\hscardboxm{Light's Justice}{1}{1}{4}{0}{}{Neutral}{76-stonetusk-boar} &

\label{hscard:millhouse}
\hscardboxm{Millhouse Manastorm}{2}{4}{4}{Legendary}{\textbf{Battlecry:} Enemy spells\\cost (0) next turn.}{Neutral}{339-millhouse-manastorm} 
 \\

\label{hscard:mistress}
\hscardboxm{Mistress of Mixtures}{1}{2}{2}{Common}{\textbf{Deathrattle: } Restore 4 Health to each hero.}{Neutral}{49646-mistress-of-mixtures} &

\label{hscard:pitfighter}
\hscardboxm{Pit Fighter}{5}{5}{6}{Common}{~}{Neutral}{22375-pit-fighter} &

\label{hscard:spellslinger}
\hscardboxm{Spellslinger}{3}{3}{4}{Legendary}{\textbf{Battlecry:} Add a random\\spell to each player's\\hand.}{Mage}{22299-spellslinger}
 \\
 
\end{tabular}

\newpage
\begin{tabular}{c c c}
 
\label{hscard:stonetusk}
\hscardboxm{Stonetusk Boar}{1}{1}{1}{0}{\textbf{Charge}}{Neutral}{76-stonetusk-boar} & 

\label{hscard:dummy}
\hscardboxm{Target Dummy}{0}{0}{2}{Rare}{\textbf{Taunt}}{Neutral}{12288-target-dummy} &
 
\label{hscard:unstable}
\hscardboxs{Unstable Portal}{2}{Rare}{Add a random minion\\to your hand. It costs\\(3) less.}{Mage}{12178-unstable-portal} 
 \\

\label{hscard:vigil}
\hscardboxs{Solemn Vigil}{5}{Common}{Draw 2 cards. Costs (1)\\less for each minion\\that died this turn.}{Paladin}{14453-solemn-vigil} &
\label{hscard:void}
\hscardboxm{Void Terror}{3}{3}{3}{Rare}{\textbf{Battlecry:} Destroy both\\adjacent minions and gain\\their Attack and Health.}{Warlock}{119-void-terror}& 
\label{hscard:yogg}
 \hscardboxm{Yogg-Saron, Hope's End}{10}{7}{5}{Legendary}{{\footnotesize \textbf{Battlecry:}~Cast~a~random~spell for each spell you've cast this\\game \emph{(targets chosen\\randomly)}.}}{Neutral}{33168-yogg-saron-hopes-end} \\

\end{tabular}

\end{document}